\newtheorem{definition}{Definition}
\newtheorem{corollary}{Corollary}
\newtheorem{theorem}{Theorem}
\begin{document}
%

\title{Reducing Conservatism in Model-Invariant Safety-Preserving Control of Propofol Anesthesia Using Falsification}

\author{Mahdi Yousefi, Klaske van Heusden, Ian M. Mitchell, J. Mark Ansermino and Guy A. Dumont \IEEEmembership{Fellow, IEEE}
\thanks{ M. Yousefi, K. van Heusden and G.A. Dumont are with the Department of Electrical and Computer Engineering, University of British Columbia, Vancouver, BC V6T 1Z4 Canada. (email: $\{$mahdiyou,klaskeh,guyd$\}$@ece.ubc.ca)}
\thanks{I.M. Mitchell is with the Department of Computer Science, University of British Columbia, Vancouver, BC V6T 1Z4 Canada. (email: mitchell@cs.ubc.cs)}
\thanks{J.M. Ansermino is with the Department of Anesthesiology, Pharmacology and Therapeutics, University of British Columbia, Vancouver, BC V6T 1Z3 Canada. (email: anserminos@yahoo.com)}
}%

\maketitle

\begin{abstract}
This work provides a formalized model-invariant safety system for closed-loop anesthesia {\color{black}that uses feedback from measured data for model falsification to reduce conservatism}.
The safety system maintains {\color{black}predicted propofol plasma concentrations, as well as the} patient's blood pressure, within safety bounds despite uncertainty in patient responses to propofol.
{\color{black}Model-invariant formal verification} is used to formalize the safety system.
This technique requires a multi-model description of model-uncertainty.
{\color{black}Model-invariant verification considers all possible dynamics of an uncertain system, and the resulting safety system may be conservative for systems that do not exhibit the worst case dynamical response}.
In this work, we employ model falsification to reduce conservatism of the model-invariant safety system.
Members of a model set that characterizes model-uncertainty are falsified if discrepancy between predictions of those models and measured responses of the uncertain system is established, thereby reducing model uncertainty.
We show that including falsification in a model-invariant safety system reduces conservatism of the safety system.
\end{abstract}

\begin{IEEEkeywords}
Falsification, formal methods, safety-preserving control, model invariance, closed-loop anesthesia, conservatism.
\end{IEEEkeywords}

%
\IEEEpeerreviewmaketitle

\section{Introduction}
Reliable operation of safety-critical control systems demands a safety guarantee over the full spectrum of their {\color{black}operational} conditions.
These applications include closed-loop control of physiological variables \cite{van2014safety, murphy2011safety}, flight envelop protection \cite{margellos2009decision}, and process control systems \cite{hasenauer2009guaranteed}.
Formal methods and safety-preserving control techniques provide us with powerful tools to verify safety specifications and design systems to guarantee safety \cite{clarke2008birth}.
These methods can be used to verify feasibility of a set of viability constraints (safe region) for a closed-loop system by calculating the viability kernel \cite{aubin2011viability}.
The viability kernel is a set of states starting from which there exists a control action which satisfies the constraints.
Once feasibility is demonstrated,  a safety-preserving controller can be synthesized to guarantee safety during operation \cite{lygeros1999controllers} \cite{kurzhanski1998ellipsoidal}.

Formal methods and safety-preserving control techniques are model-based analyses and require an accurate model that describes the evolution of system's states over time as a function of the system's inputs. 
These methods have been extended to handle uncertainty; however, the proposed solutions are limited to stochastic and additive state uncertainty \cite{gao2007reachability}, \cite{kaynama2015scalable} and cannot be used for systems with multiplicative model uncertainty.
Although multiplicative uncertainty can be expressed in the form of additive state uncertainty, this may result in very conservative solutions or lead to an empty viability kernel if the level of uncertainty is significant.

Closed-loop anesthesia is an example of safety-critical applications with multiplicative uncertainty.
Closed-loop propofol anesthesia manipulates propofol infusion rates (a commonly used hypnotic agent in intravenous anesthesia) based on feedback of a measured clinical effect on depth of hypnosis to induce and maintain a certain level of anesthesia \cite{dumont2009robust}.
Van Heusden et al. \cite{van2014safety} proposed a safety system for closed-loop anesthesia that maintains the propofol concentration in the plasma and effect-site within the therapeutic window of propofol.
This minimizes the risk of drug under/overdosing.
Safety constraints on other physiological variables such as blood pressure {\color{black}have also been suggested }to improve patient safety during closed-loop anesthesia \cite{khosravi2015constrained}. 
To formalize these safety systems and synthesize a safety-preserving controller, an exact model of each patient is required.
However, patient models are uncertain.
Model uncertainty in closed-loop anesthesia {\color{black} can be represented as multi-model uncertainty, assuming the true patient response is included in the model set}.

In this work, we propose a formalized safety system for closed-loop anesthesia which guarantees that blood pressure of individual patients remains within a safety bound despite model-uncertainty.
We previously introduced a model-invariant verification technique that can be used to synthesize a safety-preserving controller for uncertain systems with multiplicative uncertainty \cite{yousefi2016model}.
Given a multi-model description of model uncertainty, a model-invariant safety-preserving controller satisfies constraints for all members {\color{black}of a model-set}.
{\color{black}Model-invariant safety-preserving control relies on the calculation} of the model-invariant viability kernel.
The model-invariant viability kernel is a set of states starting from which there exists a control input which satisfies safety constraints for all members of the model set.
We showed that the model-invariant viability kernel is the intersection of the viability kernels of all members of the model set.
Conservatism of model-invariant safety preserving control depends on the level of model uncertainty.
As the uncertainty increases, the size of the model-invariant viability kernel decreases.
Moreover, the techniques used for viability kernel approximation may result in further conservatism \cite{kaynama2012computing}, \cite{maidens2013lagrangian}.  

Although safety-preserving control techniques rely on the off-line calculation of the viability kernel and control synthesis, the use of online data has been recommended to improve performance of the formalized safety systems.
For instance, Gillula et al. \cite{gillulay2011guaranteed} paired machine learning algorithms with formal methods and achieved high tracking performance while safety was guaranteed.
In \cite{gillula2013reducing}, they used the same approach to learn disturbances online to reduce conservatism of formalized safety systems.
In this work, we employ online measurements to decrease conservatism of {\color{black}model-invariant safety-preserving control} using model falsification \cite{PKTKN94}.
Given a set of models for an uncertain system, online data can be used to falsify inappropriate members of the model set.
This reduces model uncertainty.
Model falsification inherently deals with missing data and limited excitation.
If falsification criteria are not met, no model will be falsified and safety will not be jeopardized.
In the context of the model-invariant safety preserving control, as members of the model set are falsified, the model-invariant viability kernel can be recalculated online as the intersection of the viability kernels of unfalsified models. 
This yields less conservative model-invariant safety-preserving control if the viability kernels of the falsified models are more restrictive compared to the viability kernel of the true {\color{black}system}.

In this work, we use a model set we identified in \cite{yousefi2017modellingl}, which illustrates the effect of propofol infusion on blood pressure of 10 at-risk patients.
We employ the model-invariant verification technique and formalize a blood pressure safety system for this population.
The model-invariant formalized safety system ensures that blood pressure of the patient remains within a safety bound despite model uncertainty.
In the next step, we employ blood pressure measurement to falsify members of the set to decrease the conservatism. 
Due to the lack of excitation, blood pressure measurement cannot be used to identify the true model of each patient.
However, using clinical data, we show that the blood pressure data can be used to falsify irrelevant models. 

Low blood pressure is common in the period immediately following induction of anesthesia.
However, it is unknown whether closed-loop anesthesia can provide sufficient anesthesia while avoiding hypotension for all patients, especially at-risk patients.
Van Heusden et al. \cite{van2018closed} have recently identified a set of models which relate propofol infusion rates to depth of hypnosis for the same population discussed in \cite{yousefi2017modellingl}.
This model set enables us to study the feasibility of sufficient anesthesia during closed-loop anesthesia in the presence of the blood pressure safety system.
This model set also helps us to better demonstrate conservatism of the formalized model-invariant safety system and the improvement we achieve in terms of performance when falsification is included.
This work demonstrates a proof-of-concept for the blood pressure model-invariant safety system which includes falsification.
The results discussed in this paper are limited to the above-mentioned model set.

The paper is organized as follows: Section \ref{sec:Background} reviews the models used to describe the effects of propofol on patients. 
This section also reviews the existing safety systems proposed in the literature for closed-loop anesthesia.
In Section \ref{sec:falsification}, we describe model falsification and its {\color{black}efficiency} in reducing uncertainty of blood pressure models using clinical data.
In Section \ref{sec:MISPC}, we summarize the results of model-invariant safety-preserving control which was introduced in \cite{yousefi2016model}. 
In Section \ref{sec:Results}, we discuss the feasibility of adequate anesthesia in the presence of the formalized model-invariant blood pressure safety system as well as its conservatism.
Furthermore, we show how model falsification results in reduced conservatism of the model-invariant safety system.
Finally, Section \ref{sec:conclusion} concludes the paper.

\section{Background} \label{sec:Background}
\subsection{Modeling the clinical effect of propofol}
Propofol is an intravenously administered anesthetic drug commonly used in general anesthesia. The relation between propofol and its effect on physiological variables is traditionally described by compartmental pharmacokinetics\footnote{Pharmacokinetics (PK) describes the distribution of drugs in the plasma.} and pharmocodynamics\footnote{Pharmocodynamics (PD) relates drug concentration in the plasma to clinical effects.} (PKPD) models \cite{bibian2005introduction}. The PK model $G_{PK}(s)$ relates the propofol infusion rate $u(t)$ to the plasma concentration $C_p(t)$ :
\begin{align}
\dot{x}(t)&=A_{PK}x(t)+B_{PK}u(t) \nonumber\\
C_p(t)&= C_{PK} x(t).
\label{eq:PK}
\end{align}
Propofol pharmacokinetics are usually described using a three compartment model, corresponding to a state-space representation with three states, where $x_1(t)= C_p(t)$, and $x_2(t)$ and $x_3(t)$ represent the drug concentrations in the fast and slow compartments respectively.
 
The PD model $G_{PD}(s)$ relates the plasma concentration to the drug effect, and typically includes a transfer function relating the plasma concentration to the effect-site concentration $C_e$, 
\begin{equation}
C_e(s) = \frac{k_{e0}}{s+k_{e0}}C_p(s)
\end{equation}
and the nonlinear Hill function relating the effect-site concentration to the clinical effect \cite{absalom108overview}: 
\begin{equation}
E(C_e) = \frac{C_e^\gamma}{EC_{50}^\gamma + C_e^\gamma},
\end{equation}
where $EC_{50}$ denotes the drug effect-site concentration corresponding to 50\% of the maximal drug effect, and $\gamma$ describes the nonlinearity. 
PKPD models are used to guide drug dosing and form the basis of target-controlled infusion (TCI) systems \cite{absalom108overview}. 

Commonly used propofol PK models were identified from data collected from healthy volunteers, e.g. \cite{schnider1998influence}.
Validation and extension to a wider range of patient groups is ongoing, see for example \cite{eleveld2014general}. Pharmacodynamic models are specific to the clinical effect. Most propofol PD models describe the drug effect on the DoH, as used in TCI systems \cite{absalom108overview}. 

Studies quantifying the effect of propofol on blood pressure (BP) are limited \cite{kazama1999comparison,jeleazcov2015pharmacodynamic}. 
Extrapolation based on the identified model dependency on age results in baseline systolic pressure of $223$ mmHg for a 60 year old patient, indicating significant overfitting and limited generalizability of the model identified in \cite{jeleazcov2015pharmacodynamic}. 
Gentilini et al. \cite{gentilini2002} described the effect of alfentanil on BP using a linear model. 

It is known that the infusion rate significantly affects the decrease in arterial pressure during induction of anesthesia \cite{peacock1990effect}. 
The pharmacodynamics on DoH does not vary significantly with age \cite{kazama1999comparison}, while the effect on BP is more pronounced in elderly patients. Based on 12 patients aged 70-85y \cite{kazama1999comparison}, propofol $EC_{50}$ for the drug effect on BP was $\approx 2$ mcg/ml, while $EC_{50}$ for DoH was more than $7$ mcg/ml. 
The pharmacodynamics for DoH were faster than for BP. 
This result implies that maintaining BP while providing sufficient DoH may be challenging for these patients. This study was limited to patients with ASA status I and II, and only propofol was given during the study period. In clinical anesthesia, propofol is administered in combination with an opioid, strongly reducing $EC_{50}$ for DoH. 

These PKPD models describe population average drug responses. For the purpose of robust system development, the response as well as the interpatient variability needs to be quantified, including outlier behaviour \cite{bibian2006patient}. We identified a set of patient models describing the effect of propofol on BP \cite{yousefi2017modellingl} for a subset of the population requiring continuous BP monitoring using an arterial line. Continuous data during induction of anesthesia were available for model identification, providing adequate excitation for model identification. 
In this model set, the blood pressure effect $E_{BP}(t)$ is defined as the percentage of blood pressure decrease from the baseline mean arterial blood pressure ($\overline{BP}(t)$):
{\color{black} 
\begin{equation}
E_{BP}(t)= 100\left(1 - \frac{BP(t)}{\overline{BP}(t)}\right),
\end{equation}}
We identified DoH effect models for the same patient population \cite{van2018closed}, providing a model set describing both the DoH and BP response to propofol infusion, in the presence of remifentanil analgesia.
\subsection{Safety systems for closed-loop propofol anesthesia} 
A robustly designed closed-loop system for propofol anesthesia will provide adequate anesthesia for the patient population considered \cite{dumont2009robust}. 
To ensure safe drug administration, including for extreme outlier behaviour and in the presence of faults, we proposed a safety system that limits drug infusion to bounds within the therapeutic window of propofol \cite{van2014safety}. 
The proposed bounds are expected to be reached for outliers, indicating the extreme patient response to the clinician, and allowing the user to make a clinical decision without compromising patient safety. 

The use of safety constraints on additional physiological variables that are affected by propofol infusion has been proposed to improve system behaviour and safety for at-risk patients \cite{khosravi2015constrained}.
Propofol commonly causes cardiovascular depression and hypotension following induction of anesthesia in manually controlled anesthesia. The pharmacodynamics of the propofol effect on BP are different from the pharmacodynamics of the effect on DoH, both in gain ($EC_{50}$) and speed of response ($k_{e0}$). These differences can be exploited in a safety system. 


The safety systems proposed in \cite{van2014safety} and \cite{khosravi2015constrained} are formalized in \cite{yousefi2017formally} and \cite{yousefi2018formalized}, respectively. 
{\color{black} These safety systems constrain predicted propofol concentrations based on population average models, as measures of these variables are not available in current clinical practice. Individualizing these systems using feedback is challenging since accurate patient models are not available and model uncertainties are significant. Optimal filtering techniques or state observers can therefore not be employed for state estimation as suggested by \cite{yousefi2017output}. }
\subsection{Proposed safety system}\label{sec:Background-3}
This work aims to provide an individualized safety system for blood pressure and formalize it based on the following assumptions:
{\color{black}
\begin{enumerate}
\item The PKPD models of individual patients are unknown.
\item A population-based PK model is available.
\item Individual patient's PKPD models are described by 
$$G_{PKPD}(s)=G_{PK}(s)G_{PD}(s),$$ 
where $G_{PK}(s)$ is known and the blood pressure PD model is known to be part of a finite set of models:
\begin{align}
&G_{PD}(s)\in\{G_i(s)|~i=1,\dots p \},
\end{align}
\item A noisy measure of blood pressure is available.
\item The measurement noise is bounded and an upper limit of this bound is known.
\item The PK states are not measurable.
\item The PK states can be predicted at all times.
\item The BP from all PD models in the finite set can be simulated and predicted at all times.
\end{enumerate}

In this work, we employ the blood pressure PKPD model set identified in \cite{yousefi2017modellingl} to represent uncertainty in the blood pressure response of patients to propofol and formalize a model-invariant safety system \cite{yousefi2016model}. This safety system is augmented using feedback from the (noisy) blood pressure measurements for model falsification. This safety system requires feedback from all patient states. Since the PK states are not measurable, a population based model is formulated based on demographics of each patient to predict the plasma concentration and the other states of the PK model \cite{schuttler2000population}. The state of the PD model can be predicted for all models in the uncertainty set. In addition, the noisy measure of blood pressure available to the safety system is used to falsify members of the PD model set to reduce model uncertainty and consequently conservatism of the safety system.}


\section{Blood-pressure model falsification in propofol anesthesia} \label{sec:falsification}
\subsection{Falsification}
The concept of model falsification was introduced in the context of model validation for robust control \cite{PKTKN94}. Given an {\it{a priori}} nominal model and uncertainty description, the validation problem was cast as a falsification problem; if measured time-domain data is inconsistent with the nominal model and uncertainty bounds, it is falsified (or invalidated). This methodology uses the philosophical principle that a scientific theory can never be proven to be true, but false hypotheses can be falsified by observations \cite{kuhn2012structure}. 

The falsification concept has been used in, for example, data-driven control \cite{ST97, VDS07}, robust adaptive control \cite{Kos01}, and multi-model switching control \cite{BBMT10}. In biomedical applications, controller falsification has been proposed for control of neuromuscular blockade \cite{AMM05}. If a finite number of models/controllers is considered, falsification requires verification of consistency with data for each entry. This approach can be computationally intensive if the initial model/controller space is large or gridding is fine. An analytical approach to controller falsification has been proposed \cite{VDS07} to reduce the computational load and to extend the methodology to an infinite set of candidate controllers. 

\subsection{Blood-pressure model falsification}
{\color{black}In} this section, we formulate a falsification policy which will be used to falsify members of the BP model set identified in \cite{yousefi2017modellingl} based on the following assumptions:
\begin{enumerate}
\item The patient's PK model is known.
\item The patient's BP PD model is unknown.
\item The patient's BP PD model is a member of the BP PD model set:
\begin{align}\label{Eq-Assumption}
&G_{PD}(s)\in\mathcal{M}_{PD}=\{G_i(s)|~i=1,\dots 10 \}.
\end{align}
\item A noisy measure of blood pressure is available.
\item The measurement noise is bounded {\color{black}and an upper limit of this bound is known.} 
\item Outliers are removed from the measurements.
\end{enumerate}

{\color{black}Assume that a patient's measured blood pressure $BP(t)$ is generated as follows: 
\begin{align}
&BP(t) =\mathcal{L}^{-1}\{ G_{PK}(s)G_{PD}(s)u(s) + n(s)\},
\end{align} 
where $u(s)$ is the propofol infusion and $n(s)$ describes the bounded measurement noise.
{\color{black}$\mathcal{L}^{-1}\{\cdot\}$ is the inverse Laplace transform.}
$G_{PK}(s)$ is known, $G_{PD}(s)$ is unknown and satisfies the assumption in equation \eqref{Eq-Assumption}. For each model $G_i(s)$ in the uncertainty set, a corresponding blood pressure can be predicted: 
\begin{align}
\hat{BP}_i(t)=\mathcal{L}^{-1}\{ G_{PK}(s)G_i(s)u(s)\}. 
\end{align} 
The following policy can then be used for model falsification.

\textit{Falsification policy:} $G_i(s)\in\mathcal{M}_{PD}$ is falsified if
\begin{align}
|BP(t)-\hat{BP}_i(t)|>\gamma,
\label{eq:FP}
\end{align}
where
$\gamma$ is defined as
\begin{align}
\gamma= \max\{|n(t)|~|~\forall t\}.
\end{align}

The assumptions that the patient's BP model exists in the model set and $\gamma$ is known guarantee that the patient's true BP model is never falsified.
For the BP PD model set, $\gamma$ was identified as the maximal error between the measurements and model prediction for the 10 patients:
\begin{align}
\gamma =  17\%.
\label{eq:bound}
\end{align}

The assumption that the patient's model is a member of the model set may not be realistic in practice, as no a priori knowledge of the true patient model exists prior to induction of anesthesia.
However, to demonstrate the proof-of-concept, this assumption guarantees that not all members of the model set are falsified.

\begin{figure}[t]
\begin{center}
\includegraphics[scale=.45]{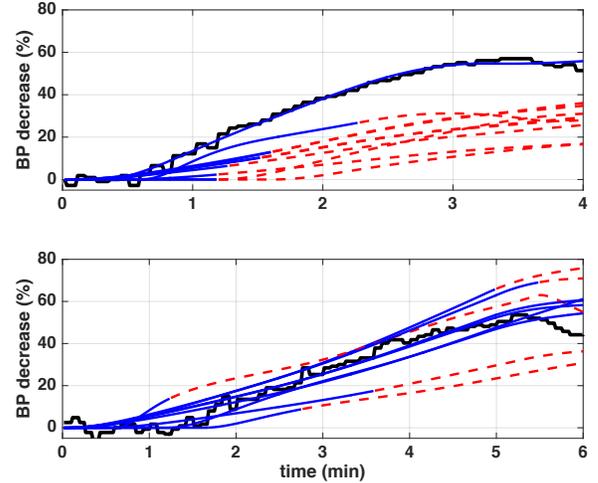}
\end{center}
\caption{Examples of model falsification using clinical data; blue lines: responses of unfalsified models; dashed red line: responses of falsified models; black line: measured BP.} \label{fig:FalsExample}
\end{figure}

Fig. \ref{fig:FalsExample} shows two examples {\color{black}where clinical data was used }to falsify members of the BP model set. According to Fig. \ref{fig:FalsExample}, the first few minutes {\color{black} of measurement data contain sufficient information to falsify outliers.} 
\section{Model-invariant safety-preserving control} \label{sec:MISPC}
\subsection{Safety-preserving control}
Consider a system described by the following state-space model: 
\begin{align}
X:~\dot{x}(t)=Ax(t)+Bu(t), 
\label{eq:ss}
\end{align}
where $x(t)\in\mathbb{R}^n$ and $u(t)\in\mathbb{R}^m$ are the states and input vectors of the system, and $A\in\mathbb{R}^{n\times n}$ and $B\in\mathbb{R}^{m\times n}$ describe the system dynamics. Assume the input is constrained $u(t)\in\mathcal{U}\in\mathbb{R}^m$. 
A formalized safety system ensures that the system states $x(t)$ remain within a (safe) constraint set, $x(t)\in \mathcal{K}\subset\mathbb{R}^n$, over a certain period of time, $t\in[0,\tau]$, where $\mathcal{K}$ and $\mathcal{U}$ are convex compact sets. 
\begin{definition}[Safety-preserving control action]
A control input $u(\cdot):~[t,\tau]\rightarrow \mathcal{U}$ is safety preserving over $[t,\tau]$ if it maintains the states of $X$ within the constraint set $\mathcal{K}$ for all time in $[t,\tau]$.
\end{definition}
The viability kernel describes the system states for which safety-preserving control action exists.
\begin{definition}[Viability kernel]
The finite-horizon viability kernel of $\mathcal{K}$ for the system $X$ is a subset of $\mathcal{K}$ characterizing all states starting from which there exists a constrained control input that maintains the states of $X$ inside $\mathcal{K}$ for all time in $[t,\tau]$:
\begin{align}
&Viab_{[t,\tau]}(\mathcal{K,U},X)=\{x_t\in \mathcal{K}|~x(t)=x_t, ~\exists u(\cdot):[t,\tau]\rightarrow \mathcal{U}\nonumber\\
&s.t.~\forall t'\in[t,\tau],~x(t')\in\mathcal{K} \}.
\end{align}
\end{definition}
If the viability kernel is empty, the safety specification cannot be met with any constrained input. 
In practice, the viability kernel needs to be approximated, see for example \cite{kaynama2012computing}, \cite{maidens2013lagrangian}, \cite{margellos2013viable}. 
If the approximated viability kernel is nonempty, a safety-preserving controller can be found and a controller can be synthesized that provides such safety-preserving control action.

Kurzhanski et al. \cite{kurzhanskiui1997ellipsoidal} showed that starting from any point in $Viab_{[t,\tau]}(\mathcal{K,U},X)$ the following control action $u_{sp}$ is safety-preserving over $[t,\tau]$:
\begin{small}
\begin{align}
&u_{sp}(t)=\arg\min_{u(t)\in\mathcal{U}}\{<l^0\left(x(t),Viab_{[t,\tau]}(\mathcal{K,U},X)\right),Bu(t)>\},
\label{eq:sp}
\end{align}
\end{small}
where
\begin{align}
&l^0\left(x(t),Viab_{[t,\tau]}(\mathcal{K,U},X)\right)=\arg\max_{l}\{<l,x(t)>-\nonumber\\
&\rho(l|Viab_{[t,\tau]}(\mathcal{K,U},X))|~\|l\|_2\leq 1\}.
\label{eq:MaxDir}
\end{align}
$\rho(l|Viab_{[t,\tau]}(\mathcal{K,U},X))$ is the support function of $Viab_{[t,\tau]}(\mathcal{K,U},X)$ in direction $l$:
\begin{align}
&\rho(l|Viab_{[t,\tau]}(\mathcal{K,U},X))=\max\{<l,z>|~z\in\nonumber\\
&Viab_{[t,\tau]}(\mathcal{K,U},X) \}.
\end{align}
In the above equations, $<\cdot,\cdot>:~\mathbb{R}^n\times\mathbb{R}^n\rightarrow \mathbb{R}$ denotes the inner product of vectors, and $\|\cdot\|:~\mathbb{R}^n\rightarrow \mathbb{R}$ is the 2-norm of vectors.
Kurzhanski et al. \cite{kurzhanskiui1997ellipsoidal} proved that the control policy \eqref{eq:sp} is safety-preserving over $t\in[0,\tau]$ if $Viab_{[t,\tau]}(\mathcal{K,U},X)$ is non-empty for all $t$ in $[0,\tau]$.

Kaynama et al. \cite{kaynama2015scalable} combined the above-mentioned safety-preserving control law with a performance controller to preserve safety while meeting performance criteria:
\begin{align}
u(t)=\left\{
\begin{array}{lr}
u_{pr}(t), & x(t)\in \breve{Viab}_{[t,\tau]}(\mathcal{K,U},X),\\
u_{sp}(t), & x(t)\notin \breve{Viab}_{[t,\tau]}(\mathcal{K,U},X).
\end{array}
\right.
\label{eq:uHybrid}
\end{align}
$\breve{Viab}_{[t,\tau]}(\mathcal{K,U},X)$ denotes the interior of ${Viab}_{[t,\tau]}(\mathcal{K,U},X)$.
$u_{pr}(t)$ is a control action provided by a performance controller designed to meet performance criteria, and $u_{sp}(t)$ is calculated based on a safety-preserving control law \eqref{eq:sp}. 
This hybrid control policy guarantees safety while satisfying desired closed-loop performance.

To prevent chattering in the control input, Kaynama et al. \cite{kaynama2015scalable} use a convex combination of $u_{pr}(t)$ and $u_{sp}(t)$ instead of \eqref{eq:uHybrid}:
\begin{align}
u(t) = (1-\zeta)u_{pr}(t)+\zeta u_{sp}(t),
\label{eq:uConvex}
\end{align}
where $\zeta$ is calculated as a function of the distance of $x(t)$ from the boundaries of ${Viab}_{[t,\tau]}(\mathcal{K,U},X)$.

The viability kernel can be calculated offline. The safety preserving control policy can be implemented in real-time, using the pre-calculated viability kernel, the input matrix $B$ and feedback from the full system state $x(t)$. 

\subsection{Model-invariant safety-preserving control} \label{sec:MISafetySys} 
Safety-preserving control as outlined above is a model-based method which requires an accurate system model.
Safety-preserving control of systems with uncertain models and formal verification of such systems have only been discussed in the context of additive and stochastic uncertainties (see \cite{girard2005reachability}, \cite{kaynama2015scalable}, \cite{abate2008probabilistic} and \cite{summers2013stochastic}).
These formal methods are not directly applicable if the system dynamics ($A, B$ in $\eqref{eq:ss}$), are uncertain. While uncertainty can be translated into additive uncertainty, this can introduce significant conservatism if the uncertainty is large. 

We introduced model-invariant safety-preserving control to formalize safety for systems with multiplicative uncertainty \cite{yousefi2016model}. Consider the following multi-model uncertainty description $\mathcal{M}$:
\begin{small}
\begin{align}
\mathcal{M}=\{X_i|X_i:\ &\dot{x}(t)=A_ix(t)+\alpha_i Bu(t),\ \alpha_i >0,\ i=1,\dots,p \}.
 \label{eq:M}
\end{align}
\end{small}
Model-invariant safety-preserving control provides a control action that keeps the states of any of the systems $X_i \in \mathcal{M}$ within the safe set $\mathcal{K}$. 

\begin{definition}[Model-invariant safety-preserving control action]\label{def:u_MISP}
A control action $u(\cdot):[t,\tau]\rightarrow \mathcal{U}$ is model-invariant safety preserving over $[t,\tau]$ if it maintains the states of all members of $\mathcal{M}$ within the constraint set $\mathcal{K}$ for all time in $[t,\tau]$.
\end{definition}
The corresponding model-invariant viability kernel is defined as follows.
\begin{definition}[Model-invariant viability kernel]\label{def:MIVK}
The finite-horizon model-invariant viability kernel of $\mathcal{K}$ for the model set $\mathcal{M}$ is a subset of $\mathcal{K}$ which includes all initial conditions starting from which there exists a constrained control input that maintains the states of all members of $\mathcal{M}$ inside $\mathcal{K}$ for all time in $[t,\tau]$:
\begin{align}
&Viab_{[t,\tau]}(\mathcal{K,U,M})=\{x_t\in \mathcal{K}|~x(t)=x_t,\exists u(\cdot):[t,\tau]\rightarrow \mathcal{U},\nonumber\\
&~s.t.~\forall X_i\in\mathcal{M}~\&~\forall t'\in[t,\tau],~x(t')\in\mathcal{K} \}.
\end{align}
\end{definition}

Let $\mathcal{I}_{[t,\tau]}$ denote the intersection of the viability kernels of all individual set members: 
\begin{align}
\mathcal{I}_{[t,\tau]}=\bigcap_{X_i\in\mathcal{M}}Viab_{[t,\tau]}(\mathcal{K,U},X_i).
\end{align}
The following theorem formulates the main result of \cite{yousefi2016model}: 
\begin{theorem}
Consider the model set $\mathcal{M}$ defined in $\eqref{eq:M}$ and safe region $\mathcal{K}$. 
Assume the intersection of the viability kernels of its individual members $\mathcal{I}_{[t,\tau]}$ is not empty and that $x(t) \in \mathcal{I}_{[t,\tau]}$. Define the following control policy $u_{sp_M}(t)$
\begin{align}
u_{sp_M}(t)=\arg\min_{u(t)\in\mathcal{U}}\{<l^0\left(x(t),\mathcal{I}_{[t,\tau]}\right),Bu(t)> \},
\end{align}
where $l^0\left(x(t),\mathcal{I}_{[t,\tau]}\right)$ is defined as \eqref{eq:MaxDir}.
Then the control policy $u_{sp_M}(t)$ is model-invariant safety-preserving.  
\end{theorem}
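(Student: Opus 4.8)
The plan is to derive the statement from the single-model safety-preserving control result of Kurzhanski et al.\ \cite{kurzhanskiui1997ellipsoidal} together with the identity $Viab_{[t,\tau]}(\mathcal{K,U,M})=\mathcal{I}_{[t,\tau]}$ established in \cite{yousefi2016model}. That identity already guarantees that, since $x(t)\in\mathcal{I}_{[t,\tau]}$, \emph{some} admissible input keeps every $X_i\in\mathcal{M}$ inside $\mathcal{K}$ on $[t,\tau]$; what remains is to show that the particular feedback $u_{sp_M}$ is one of them. The structural fact that lets a single feedback serve all models is that the members of $\mathcal{M}$ share the input matrix up to a positive scalar: for every direction $l$ and every $i$,
\[
\arg\min_{u(t)\in\mathcal{U}}\langle l,\alpha_i Bu(t)\rangle=\arg\min_{u(t)\in\mathcal{U}}\langle l,Bu(t)\rangle ,
\]
because $\alpha_i>0$ does not change the minimiser. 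Hence $u_{sp_M}(t)$, although computed once from $l^0(x(t),\mathcal{I}_{[t,\tau]})$, is simultaneously the most strongly inward-pushing admissible input, in that direction, for each $X_i\in\mathcal{M}$.

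Next I would verify safety by propagating $u_{sp_M}$ through all members of $\mathcal{M}$ in parallel and showing that the resulting trajectory $x^{(i)}(\cdot)$ of each $X_i$ never leaves $\mathcal{I}_{[\cdot,\tau]}$; since $\mathcal{I}_{[s,\tau]}\subseteq\mathcal{K}$, this is exactly model-invariant safety. Using that the finite-horizon kernels satisfy $Viab_{[t,\tau]}\subseteq Viab_{[t',\tau]}$ whenever $t\leq t'$, and hence so does $\mathcal{I}_{[\cdot,\tau]}$, it suffices to show that for each model the support-function gap
\[
g_i(s)=\max_{\|l\|_2\leq 1}\big(\langle l,x^{(i)}(s)\rangle-\rho(l|\mathcal{I}_{[s,\tau]})\big)
\]
starts at $0$ (because $x(t)\in\mathcal{I}_{[t,\tau]}$) and is non-increasing. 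Differentiating $g_i$ via Danskin's theorem (using one-sided derivatives where $\mathcal{I}_{[s,\tau]}$ is nonsmooth), the maximiser $l^0=l^0(x^{(i)}(s),\mathcal{I}_{[s,\tau]})$ lies in the normal cone of $\mathcal{I}_{[s,\tau]}$ at $x^{(i)}(s)$, which by the intersection rule for normal cones is a non-negative combination $l^0=\sum_j\mu_j n_j$ of the outward normals $n_j$ of exactly those individual kernels $V_j:=Viab_{[s,\tau]}(\mathcal{K,U},X_j)$ that are active there; on that face one also has $\rho(l^0|\mathcal{I}_{[s,\tau]})=\sum_j\mu_j\rho(n_j|V_j)$. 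Feeding in the single-model sub-tangentiality inequalities that underlie Kurzhanski's theorem for each active $V_j$ — each attained by the \emph{same} minimiser above, by the displayed identity — and combining them with the non-negative weights $\mu_j$ should yield $\dot g_i(s)\leq 0$. Integrating gives $x^{(i)}(s)\in\mathcal{I}_{[s,\tau]}\subseteq\mathcal{K}$ for all $s\in[t,\tau]$ and all $i$, i.e.\ $u_{sp_M}$ is model-invariant safety-preserving.

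The hard part will be the states on the portion of $\partial\mathcal{I}_{[s,\tau]}$ where several of the individual viability constraints bind at once: there $\mathcal{I}_{[s,\tau]}$ is nonsmooth, the active set and the weights $\mu_j$ may change with $s$, and one must show that the \emph{single} control $u_{sp_M}$, which minimises the velocity component only along the aggregated normal $l^0$ rather than along each $n_j$ separately — for a trajectory governed by $A_i$ rather than by the $A_j$'s attached to the active kernels — still keeps \emph{every} active model from crossing $\partial\mathcal{I}_{[s,\tau]}$. This is precisely where the common-input structure ($\alpha_i B$ with $\alpha_i>0$, $B$ acting identically on the shared dynamics) is essential, and where the comparison step must be written with Dini derivatives and upper semicontinuity rather than ordinary differentiation. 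The remaining ingredients — existence and measurable selection of $u_{sp_M}$, convexity and compactness of $\mathcal{I}_{[t,\tau]}$ as an intersection of convex compact sets, and the reduction of the "a common safe input exists" part — follow directly from the cited results.
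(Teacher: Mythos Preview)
Your route differs from the paper's, and the detour through the normal-cone decomposition of $\mathcal{I}_{[t,\tau]}$ creates precisely the obstacle you label the ``hard part'' --- one the paper never has to face.

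The paper fixes an arbitrary $X_j\in\mathcal{M}$ and works with the single Lyapunov-like function $Dist\big(x(t),\mathcal{I}_{[t,\tau]}\big)$ evaluated along the trajectory of $X_j$. It never splits $l^0$ into normals $n_k$ of the constituent kernels $V_k$. Instead, since $\mathcal{I}_{[t,\tau]}\subseteq Viab_{[t,\tau]}(\mathcal{K,U},X_j)$, it simply substitutes $\mathcal{I}_{[t,\tau]}$ for $V_j$ as the target set in Kurzhanski's distance-derivative computation for $X_j$; the derivative then reduces to $\langle l^0(x(t),\mathcal{I}_{[t,\tau]}),\alpha_j B u(t)\rangle$ minus a $u$-independent term, and minimising over $\mathcal{U}$ yields a control independent of $j$ because $\alpha_j>0$. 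As this holds for every $j$, the single feedback $u_{sp_M}$ is safety-preserving for each model, hence model-invariant. Crucially, both the trajectory and the support-function evolution are taken under the \emph{same} dynamics $A_j,\alpha_j B$, so no cross-model drift term ever appears.

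In your scheme, by contrast, you track $x^{(i)}$ under $A_i$ while writing $l^0=\sum_j\mu_j n_j$ and invoking sub-tangentiality of each active $V_j$, which is tied to $A_j$. When you assemble $\dot g_i$ the drift contributions leave behind $\sum_j\mu_j\langle n_j,(A_i-A_j)x\rangle$, which has no sign; the common-input-direction identity you display neutralises only the $\alpha_j B$ part of the mismatch, not the $A_i$ versus $A_j$ part. So the step ``should yield $\dot g_i\le 0$'' does not go through as written, and the difficulty you flag is not mere nonsmooth bookkeeping but a structural mismatch your decomposition itself introduces. The paper's one-model-at-a-time argument with the undecomposed $\mathcal{I}_{[t,\tau]}$ bypasses it entirely, at the cost of leaning more heavily on the under-approximation step (that Kurzhanski's distance argument remains valid when $V_j$ is replaced by any subset of it).
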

\begin{proof}
For ${X}_j\in\mathcal{M}$, let define
\begin{align}
V(t)=Dist^2\left(x(t),Viab_{[t,\tau]}(\mathcal{K,U},X_j)\right) ,
\end{align}
where $Dist\left(x(t),Viab_{[t,\tau]}(\mathcal{K,U},X_j)\right)$ is the Hausdorff distance measuring the distance of $x(t)$ from $Viab_{[t,\tau]}(\mathcal{K,U},X_j)$ \cite{kurzhanskiui1997ellipsoidal}:
\begin{align}
&Dist\left(x(t),Viab_{[t,\tau]}(\mathcal{K,U},X_j)\right)=\nonumber\\
&\min\{\|x(t)-v\|_2|~v\in Viab_{[t,\tau]}(\mathcal{K,U},X_j)\}=\nonumber\\
&\max\{<l,x(t)>-\rho(l|Viab_{[t,\tau]}(\mathcal{K,U},X_j))|~\|l\|_2 \leq1\}=\nonumber\\
&<l^0\left(x(t),Viab_{[t,\tau]}(\mathcal{K,U},X_j)\right),x(t)>-\nonumber\\
&\rho(l^0\left(x(t),Viab_{[t,\tau]}(\mathcal{K,U},X_j)\right)|Viab_{[t,\tau]}(\mathcal{K,U},X_j)).
\end{align}
Kurzhanski et al. \cite{kurzhanskiui1997ellipsoidal} showed that assuming $\dfrac{d}{dt}V(t)$ exists, starting form any point in $Viab_{[t,\tau]}(\mathcal{K,U},X_j)$ the following control policy is safety preserving over $[t,\tau]$ for $X_j$:
\begin{align}
u(t)=\arg\min_u\{\dfrac{d}{dt}V(t)|~u\in\mathcal{U}\}.
\label{eq:u_sp1}
\end{align}
Due to the fact that $Dist\left(x(t),Viab_{[t,\tau]}(\mathcal{K,U},X_j)\right)\geq 0$, \eqref{eq:u_sp1} can be expressed as
\begin{align}
u(t)=\arg\min_{u(t)\in\mathcal{U}}\{\dfrac{d}{dt}Dist\left(x(t),Viab_{[t,\tau]}(\mathcal{K,U},X_j)\right)\}.
\label{eq:u_sp2}
\end{align}
Since $\mathcal{I}_{[t,\tau]} \subseteq Viab_{[t,\tau]}(\mathcal{K,U},X_j)$, we can use $\mathcal{I}_{[t,\tau]}$ in \eqref{eq:u_sp2} as an under-approximation of $Viab_{[t,\tau]}(\mathcal{K,U},X_j)$ and rewrite the safety-preserving control policy \eqref{eq:u_sp2} as:
\begin{align}
u(t)=\arg\min_u\{\dfrac{d}{dt}Dist\left(x(t),\mathcal{I}_{[t,\tau]}\right)|~u\in\mathcal{U}\}.
\label{eq:u_sp3}
\end{align}
Accordingly, starting from any point in $\mathcal{I}_{[t,\tau]}$, \eqref{eq:u_sp3} is safety preserving over $[t,\tau]$ for $X_j$.
The derivative of the Hausdorff distance can be simplified as
\begin{align}
&\dfrac{d}{dt}Dist\left(x(t),\mathcal{I}_{[t,\tau]}\right)=
<l^0\left(x(t),\mathcal{I}_{[t,\tau]}\right),\dot{x}(t)> \nonumber\\
&-\dfrac{\partial}{\partial t}\rho(l^0\left(x(t),\mathcal{I}_{[t,\tau]}\right)|\mathcal{I}_{[t,\tau]}).
\label{eq:d_dist}
\end{align}
According to \cite{kurzhanskiui1997ellipsoidal} and due to the fact $\mathcal{I}_{[t,\tau]}\subseteq Viab_{[t,\tau]}(\mathcal{K,U},X_j)$, the partial derivative of the support function can be written as
\begin{align}
&\dfrac{\partial}{\partial t}\rho(l^0\left(x(t),\mathcal{I}_{[t,\tau]}\right)|\mathcal{I}_{[t,\tau]})=<l^0\left(x(t),\mathcal{I}_{[t,\tau]}\right),A_jx(t)>+\nonumber\\
&\rho(l^0\left(x(t),\mathcal{I}_{[t,\tau]}\right)|\alpha_jB\mathcal{U}).
\label{eq:d_SF}
\end{align}
Substituting \eqref{eq:d_SF} in \eqref{eq:d_dist} yields
\begin{align}
&\dfrac{d}{dt}Dist\left(x(t),\mathcal{I}_{[t,\tau]}\right)=\nonumber\\
&<l^0\left(x(t),\mathcal{I}_{[t,\tau]}\right),A_jx(t)+\alpha_jBu(t)>-\nonumber\\
&<l^0\left(x(t),\mathcal{I}_{[t,\tau]}\right),A_jx(t)>- \rho(l^0\left(x(t),\mathcal{I}_{[t,\tau]}\right)|\alpha_jB\mathcal{U})=\nonumber\\
&<l^0\left(x(t),\mathcal{I}_{[t,\tau]}\right),\alpha_jBu(t)>-\rho(l^0\left(x(t),\mathcal{I}_{[t,\tau]}\right)|\alpha_jB\mathcal{U}).
\end{align}
Consequently, we can express \eqref{eq:u_sp3} as
\begin{align}
u(t)=\arg\min_{u(t)\in\mathcal{U}}\{<l^0\left(x(t),\mathcal{I}_{[t,\tau]}\right),\alpha_jBu(t)>\}.
\label{eq:u_sp4}
\end{align}
Due to the fact that $\alpha_i>0$ and \eqref{eq:u_sp4} is convex, the minimizer of \eqref{eq:u_sp4} is independent of $\alpha_i$.
Thus, we can simplify \eqref{eq:u_sp4} to the following optimization problem:
\begin{align}
u(t)=\arg\min_{u(t)\in\mathcal{U}}\{<l^0\left(x(t),\mathcal{I}_{[t,\tau]}\right),Bu(t)>\}.
\label{eq:u_sp5}
\end{align} 
Since $\mathcal{I}_{[t,\tau]}\subseteq Viab_{[t,\tau]}(\mathcal{K,U},X_i)$ for all $X_i\in\mathcal{M}$, \eqref{eq:d_SF} holds for all members of $\mathcal{M}$ and \eqref{eq:u_sp3} simplifies to \eqref{eq:u_sp5} for all $X_i\in\mathcal{M}$.
Thus, \eqref{eq:u_sp5} maintains the states of all members of $\mathcal{M}$ over $[0,\tau]$ within $\mathcal{K}$.
Consequently, according to \textit{Definition} \ref{def:u_MISP}, \eqref{eq:u_sp5} is model-invariant safety preserving.
\end{proof}
It follows by definition that the intersection is a subset of the model-invariant viability kernel:
\begin{align}
\mathcal{I}_{[t,\tau]} \subseteq Viab_{[t,\tau]}(\mathcal{K,U,M}).
\label{eq:IsubViab}
\end{align}
Note that $B$ in $\eqref{eq:M}$ is known, $\mathcal{I}_{[t,\tau]}$ can be calculated offline, the model-invariant safety-preserving control policy $u_{sp_M}(t)$ can be implemented in real-time using feedback from the measured state $x(t)$, and the same model-invariant safety-preserving controller can be used for any system $X_i\in\mathcal{M}$.
\begin{corollary}
\begin{align}
Viab_{[t,\tau]}(\mathcal{K,U,M})=\mathcal{I}_{[t,\tau]}.
\label{eq:IeqViab}
\end{align}
\end{corollary}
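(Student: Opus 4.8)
The plan is to prove the two inclusions separately. One of them, $\mathcal{I}_{[t,\tau]} \subseteq Viab_{[t,\tau]}(\mathcal{K,U,M})$, is already recorded as \eqref{eq:IsubViab} and is a direct consequence of the preceding Theorem: for any $x(t)\in\mathcal{I}_{[t,\tau]}$ the policy $u_{sp_M}(t)$ is model-invariant safety-preserving, so by Definition \ref{def:MIVK} the point $x(t)$ lies in $Viab_{[t,\tau]}(\mathcal{K,U,M})$. Hence only the reverse inclusion needs to be argued.

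For $Viab_{[t,\tau]}(\mathcal{K,U,M}) \subseteq \mathcal{I}_{[t,\tau]}$, I would fix an arbitrary $x_t \in Viab_{[t,\tau]}(\mathcal{K,U,M})$. By Definition \ref{def:MIVK} there exists a single admissible control $u(\cdot):[t,\tau]\to\mathcal{U}$ such that, with $x(t)=x_t$, the trajectory of every $X_i\in\mathcal{M}$ remains in $\mathcal{K}$ on $[t,\tau]$. Restricting attention to a fixed member $X_i$, that same $u(\cdot)$ keeps the state of $X_i$ inside $\mathcal{K}$ throughout $[t,\tau]$, so it is a safety-preserving control action for $X_i$ in the sense of the definition of the viability kernel, and therefore $x_t \in Viab_{[t,\tau]}(\mathcal{K,U},X_i)$. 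Since $X_i$ was arbitrary, $x_t \in \bigcap_{X_i\in\mathcal{M}} Viab_{[t,\tau]}(\mathcal{K,U},X_i) = \mathcal{I}_{[t,\tau]}$. Combining the two inclusions yields \eqref{eq:IeqViab}.

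The point worth stressing is why this equality is not a triviality of set theory: membership of $x_t$ in each $Viab_{[t,\tau]}(\mathcal{K,U},X_i)$ a priori supplies only a per-model safety-preserving control, and these controls need not agree across models. The reverse inclusion just sketched is elementary, but the forward inclusion \eqref{eq:IsubViab} genuinely rests on the Theorem, which exhibits one common control $u_{sp_M}$ that works for every $X_i\in\mathcal{M}$ simultaneously. So the substantive work behind the corollary has already been done in the proof of the Theorem, and there is no real obstacle remaining; the corollary itself is a short bookkeeping step, which I would present in essentially the two-paragraph form above.
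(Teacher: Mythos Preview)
Your proposal is correct and follows essentially the same approach as the paper: both use \eqref{eq:IsubViab} for one inclusion and, for the reverse, argue that a single control keeping all $X_i$ safe is in particular safety-preserving for each individual $X_i$. The only cosmetic difference is that the paper phrases the reverse inclusion by contradiction (assuming some $x'\in Viab_{[t,\tau]}(\mathcal{K,U,M})\setminus\mathcal{I}_{[t,\tau]}$ and deriving that no control can keep the corresponding $X_j$ safe), whereas you give the equivalent direct argument; your closing remark on where the substantive work lies is accurate.
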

\begin{proof}
According to \eqref{eq:IsubViab}, to show \eqref{eq:IeqViab} holds, it is sufficient to show that 
\begin{align}
Viab_{[t,\tau]}(\mathcal{K,U,M})\subseteq \mathcal{I}_{[t,\tau]}.
\label{eq:ViabsubI}
\end{align}
Suppose \eqref{eq:ViabsubI} does not hold:
\begin{align}
Viab_{[t,\tau]}(\mathcal{K,U,M})\not\subseteq \mathcal{I}_{[t,\tau]}.
\label{eq:ViabnsubI}
\end{align}
Accordingly, 
\begin{align}
\exists x'\in Viab_{[t,\tau]}(\mathcal{K,U,M}) ~s.t.~ x'\not\in\mathcal{I}_{[t,\tau]}.
\label{eq:crl2}
\end{align}
Due to the fact that $\mathcal{I}_{[t,\tau]}=\bigcap_{X_i\in\mathcal{M}}Viab_{[t,\tau]}(\mathcal{K,U},X_i)$, \eqref{eq:crl2} implies that
\begin{align}
\exists X_j\in\mathcal{M}~s.t.~x'\not\in Viab_{[t,\tau]}(\mathcal{K,U},X_j).
\end{align}
Therefore, starting from $x'$ there is no safety-preserving control action to preserve safety of $X_j$ over $[t,\tau]$.
According to the definition of the model-invariant viability kernel (\textit{Definition} \ref{def:MIVK}), $x'$ cannot be a member of $Viab_{[t,\tau]}(\mathcal{K,U,M})$.
This means
\begin{align}
\not\exists x'\in Viab_{[t,\tau]}(\mathcal{K,U,M})~s.t.~x'\not\in \mathcal{I}_{[t,\tau]}.
\label{eq:crl1}
\end{align}
Equation \eqref{eq:crl1} contradicts the assumption that \eqref{eq:ViabsubI} does not hold.
Therefore, \eqref{eq:ViabsubI} holds by contradiction.
Comparing \eqref{eq:IsubViab} and \eqref{eq:ViabsubI} yields
\begin{align}
Viab_{[t,\tau]}(\mathcal{K,U,M})=\mathcal{I}_{[t,\tau]}.
\end{align}
\end{proof}
\section{Falsified robust safety-preserving control of propofol anesthesia} \label{sec:Results}
While {\it{a priori}} information is insufficient to establish which individual model $X_i$ is controlled, data collected during operation can be used to reduce the uncertainty and consequently the conservatism introduced by model-invariant safety-preserving control. 

This scheme takes advantage of the characteristics of falsification, where limited excitation and missing data are dealt with naturally. If the data does not contain sufficient information to determine consistency with the model, it cannot be falsified. While in this situation falsification may not reduce conservatism, it does not affect the safety of model-invariant safety-preserving control. When data is missing due to for example sensor faults, there is no information to determine consistency and again, no model can be falsified. These characteristics are particularly important in biomedical applications, where excitation is limited and missing data is not uncommon. 

{\color{black}To illustrate the effectiveness of the proposed method, closed-loop anesthesia is simulated using three safety systems: 
\begin{enumerate}
\item Closed-loop anesthesia with an individualized safety system. This safety system formalizes individualized safety constraints assuming the patient models are known. This scheme results in minimum conservatism as we assume the patient models are known and there is no model uncertainty, however, this assumption is not realistic in practice. 
\item Closed-loop anesthesia with a model-invariant safety system. This safety system guarantees safety for all patient responses included in the model set, can be implemented in practice, but introduces significant conservatism. 
\item Closed-loop anesthesia with a model-invariant safety system using falsification. This formalized safety system includes feedback from the measured blood pressure response to falsify models, as proposed in this paper. This system can be implemented in practice and it is shown that it significantly reduces conservatism. 
\end{enumerate}

The recommended range of the DoH index during general anesthesia is 40-60 \cite{agrawal2010recommended}.
The index of 50 is mostly targeted in closed-loop anesthesia \cite{zikov2006quantifying}.
However, it is unknown if the index of 50 is achievable for all patients including at-risk patients in the presence of blood pressure constraints.
In this work, we study feasibility of sufficient anesthesia (DoH index of 50) in the presence of various blood pressure constraints.
}
\subsection{Simulation results}

\subsubsection{Closed-loop anesthesia with an individualized safety system}
Here, we formalize a blood pressure safety system for each patient assuming:
\begin{enumerate}
\item The BP (and DoH) PKPD model is known for each patient.
\item All states of the PKPD model are measurable {\color{black}or predicted}.
\item There is no measurement noise.
\end{enumerate}
The above assumptions are sufficient to formalize an individualized safety system with minimum conservatism.
However, existing methods to approximate the viability kernel can result in a certain level of conservatism.
Maiden et al. in \cite{maidens2013lagrangian} compared conservatism of these methods.
In this paper, we employ convex polytopes to represent constraint sets.
We use the Multi-Parametric Toolbox 3.0 \cite{MPT3} to conduct operations on polytopes to calculate the viability kernel.
We use the the recursive approach developed in \cite{maidens2013lagrangian} to approximate the viability kernel.

We employ the set of 9 BP PKPD models identified in \cite{yousefi2017modellingl} for which van Heusden et al. \cite{van2018closed} identified DoH PKPD models.
We use the BP PKPD model of each patient to formalize a blood pressure safety system for that specific patient.
We employ the corresponding DoH PKPD models to simulate the response of each patient during closed-loop anesthesia and in the presence of the BP safety system.
We employ the safety constraints proposed in \cite{van2014safety} which limit the PK states to the therapeutic window of propofol:
\begin{align}
C_p, x_2,x_3 \in [0,10\text{mg/l}].
\label{eq:PKConstraints}
\end{align}
The following constraint on the infusion rate of propofol is also suggested in \cite{van2014safety}:
\begin{align}
u(t)\in[0,600 \text{ml/h}].
\label{eq:inputConstraints}
\end{align}
\begin{figure}[t]
\begin{center}
\includegraphics[scale=.4]{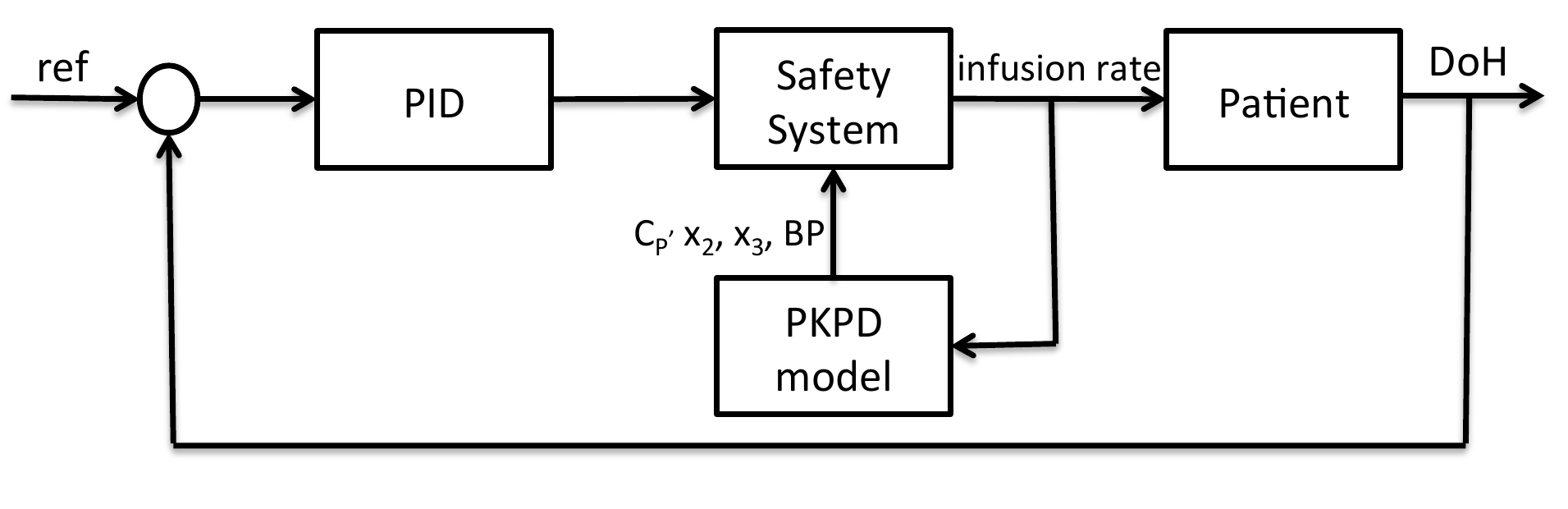}
\end{center}\caption{Block diagram of the closed-loop anesthesia the individualized safety system.}
\label{fig:BlockDiag_IndivSS}
\end{figure}
\begin{figure}[t]
\begin{center}
\includegraphics[scale=.45]{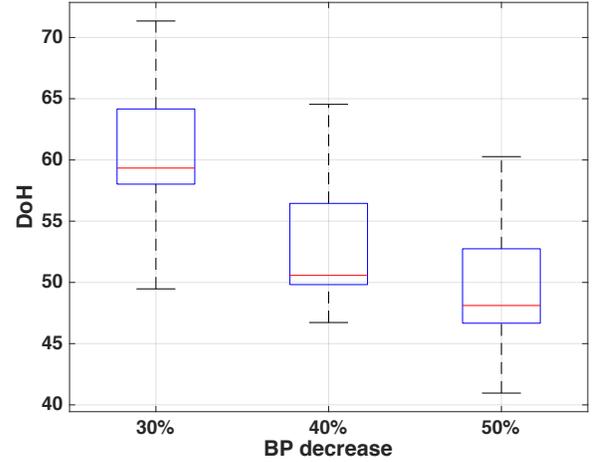}
\end{center}\caption{Comparing DoH of the patients with different constraints on BP, achieved after 20 min from the start of the closed-loop administration of propofol with the individualized safety system.}
\label{fig:CMPBPDrops}
\end{figure}

We discuss three different constraints on BP:
\begin{align}
\text{BP drop} \leq 30\%,~40\%,~50\%.
\end{align}

We employ the PID controller robustly tuned by Dumont et al. \cite{dumont2011closed} to achieve the closed-loop goal.
We include back-calculation anti-windup suggested by van Heusden et al. \cite{van2014safety} to improve the performance of the PID controller when the safety constraints are active.
For each patient, we formalize the safety system to guarantee that the states of the PKPD model remain within the safety constraints.
The viability kernels of the patient models are given in \cite{yousefi2017modellingl}.
Fig. \ref{fig:BlockDiag_IndivSS} illustrates the block diagram of the implemented system in simulation. 

{\color{black}Fig. \ref{fig:CMPBPDrops} shows the box plot of the DoH index of patients with different constraints on BP decrease at $t=20\text{min}$ following the start of propofol infusion. 
Induction of anesthesia cannot be completed for all cases when the BP decrease is limited to 30\% and 40\% \footnote{The induction of anesthesia is completed if DoH goes below 60 and stays there for more than 30s.}.
In contrast, when the BP decrease is limited to 50\%, induction of anesthesia can be completed in all cases.
Fig. \ref{fig:ResponsesBPls50} shows the closed-loop responses of the patients with the bound on BP decrease at 50\%.}
\begin{center}
\begin{figure}[t]
\includegraphics[scale=.45]{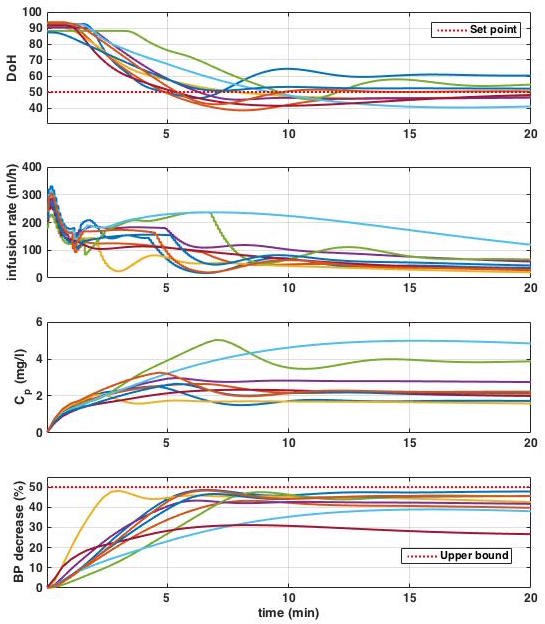}
\caption{The closed-loop responses of the patients with BP decrease limited to 50\% (individualized safety system).}
\label{fig:ResponsesBPls50}
\end{figure}
\end{center}
\subsubsection{Closed-loop anesthesia with a model-invariant safety system}
In the previous section, we showed that all  patients could achieve a DoH index between 40 and 60 when 50\% decrease in blood pressure is allowed. In this section, we formalize a model-invariant viability kernel for the same population based on the assumption discussed in section \ref{sec:Background-3}.

We assume the PK model of each patient is known \cite{schuttler2000population}.
We assume the patient's BP PD model is unknown, however it is part of the model set identified in \cite{yousefi2017modellingl}. 
{\color{black}We calculate the model-invariant viability kernel according to this multi-model uncertainty description.}

We limit blood pressure decrease to be less than $50\%$ and use the constraints defined in \eqref{eq:PKConstraints} and \eqref{eq:inputConstraints} on the states of the PK model and the infusion rate.
The calculated model-invariant viability kernel under the mentioned assumptions is given in \cite{yousefi2017modellingl}.
{\color{black}The resulting model-invariant safety-preserving controller requires feedback from all states of the BP model. The states of the patient PK model are assumed known, and can be predicted. The PD model on the other hand is uncertain, unknown, and cannot be predicted for each patient. 

Although we assume a noisy measure of blood pressure is available, this measurement can not be used as feedback to the model-invariant safety-preserving control.
A limited number of solutions have been proposed in literature to deal with measurement noise in safety-preserving control schemes (e.g. \cite{yousefi2017output},\cite{lesser2016safety}).}
These solutions rely on formulating a state-observer which requires a known model of a system.
To the best of our knowledge, safety-preserving control in the presence of measurement noise as well as model-uncertainty has not been discussed in literature.
\begin{figure}[t]
\begin{center}
\includegraphics[scale=.4]{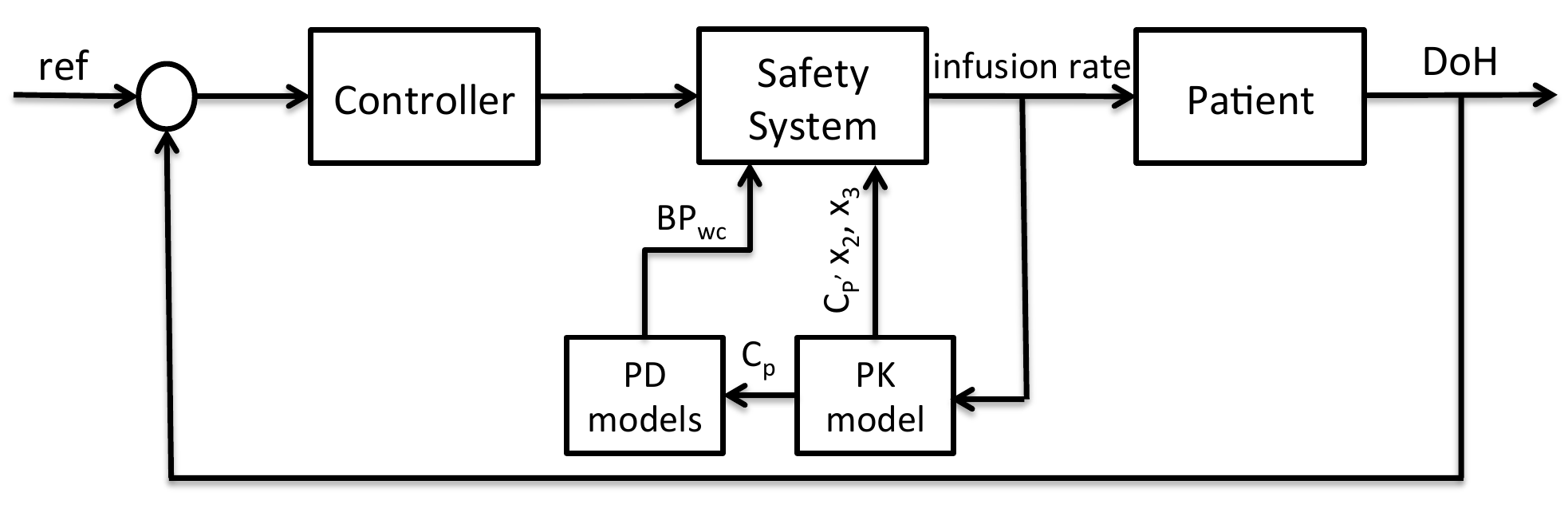}
\end{center}\caption{Block diagram of the closed-loop anesthesia with the model-invariant safety system.}
\label{fig:BlockDiag_MISP}
\end{figure}

In this paper, {\color{black} we therefore employ} the worst case prediction of blood pressure decrease as feedback to the model-invariant safety-preserving control.
We define the worst case blood pressure decrease $BP^{\downarrow}_{wc}$ as follows:
\begin{align}
BP^\downarrow_{wc}=\arg\max_{BP^\downarrow_i}\{~|BP^\downarrow_i|~|~i=1,\dots,9\},
\label{eq:WCBP}
\end{align}
where $BP^\downarrow_i$ is the blood pressure decrease predicted using the $i^{th}$ member of the BP model set.
The model-invariant safety-preserving control maintains the states of all members of the model set within the safe region.
According to the definition of the worst case blood pressure decrease, the control input which keeps $BP^{\downarrow}_{wc}$ less than $50\%$, maintains all $BP^\downarrow_i$s below $50\%$.
Consequently, since we assume that the true blood pressure model of the patients exists in the model set, maintaining the worst case predicted blood pressure decrease below $50\%$ guarantees that the decrease in blood pressure of each patient stays within the safe range.
Fig. \ref{fig:BlockDiag_MISP} illustrates the block diagram of the implemented system in simulation. 
\begin{figure}[t]
\begin{center}
\includegraphics[scale=.45]{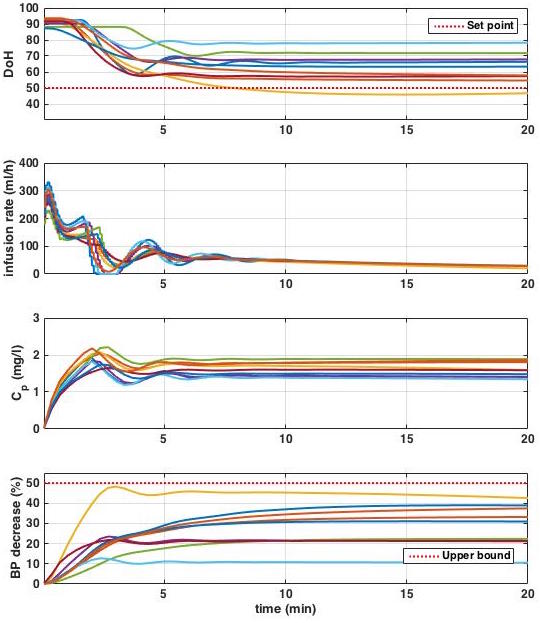}
\end{center}
\caption{The closed-loop responses of the patients with BP decrease limited to 50\% (model-invariant safety system).}
\label{fig:MISP}
\end{figure}

{\color{black}The bottom plot in Fig. \ref{fig:MISP} shows blood pressure decrease of all patients using the model-invariant safety system. 
The safety system maintains the blood pressure of all patients within the safe bound. 
However, the safety system does not allow the closed-loop controller to complete induction of anesthesia for the majority of the population.
The model-invariant safety system introduces significant conservatism compared to the individualized safety system (see Fig. \ref{fig:CMPSP-MISP-Fals_20min}).}
\subsubsection{Closed-loop anesthesia with a model-invariant safety system using falsification}
{\color{black}The model-invariant viability kernel is calculated as the intersection of the viability kernels of all individual members of the model set. 
Accordingly, if the viability kernel of one of the models is restrictive, the model-invariant safety system restricts the drug infusion for all patients. In Section \ref{sec:falsification}, we showed that clinical blood pressure data during the first minutes following the start of propofol infusion contain sufficient information to falsify outliers. By falsifying restrictive models in the model set, and recalculating the model-invariant viability kernel as the intersection of the remaining models, conservatism of model-invariant safety-preserving control can be reduced. If no model is falsified, the model-invariant viability kernel remains unchanged and safety is not compromised.}

Here, we employ the model-invariant safety system we formalized in the previous section, but instead of calculating the model invariant viability kernel off-line, we calculate it online.
At each sample time, we falsify a member of the model set if the difference between its simulated response and the measured BP is bigger than a certain threshold.
Then, we calculate the model-invariant viability kernel (the intersection) with the viability kernel of the falsified model removed.
We specify the threshold according to equation (\ref{eq:bound}).
Furthermore, the worst case predicted blood pressure decrease, which is defined in \eqref{eq:WCBP}, is selected from the predictions of the unfalsified models.
Fig. \ref{fig:BlockDiag_MISP_Fals} illustrates the block diagram of the implemented closed-loop anesthesia with the model-invariant safety system with falsification. 

Fig. \ref{fig:Results_WCBPls50_MISS_Fals} illustrates the closed-loop responses of the patients when the model-invariant safety system with falsification is in place.
Accordingly, induction of anesthesia is completed in all cases and DoH reaches an index between 40-60 for all patients.
Comparing Fig. \ref{fig:MISP} and Fig, \ref{fig:Results_WCBPls50_MISS_Fals} depicts a significant improvement in the performance of the model-invariant safety system when falsification is included.
Fig. \ref{fig:CMPSP-MISP-Fals_20min} compares conservatism of the safety systems discussed in this work.
The individualized safety system maintains safety with minimum conservatism while the model-invariant safety system significantly increases conservatism.
However, adding model falsification to the model-invariant safety system significantly decreases conservatism of the model-invariant safety system and brings it to the conservatism level of the individualized safety system.
\begin{figure}[t]
\begin{center}
\includegraphics[scale=.4]{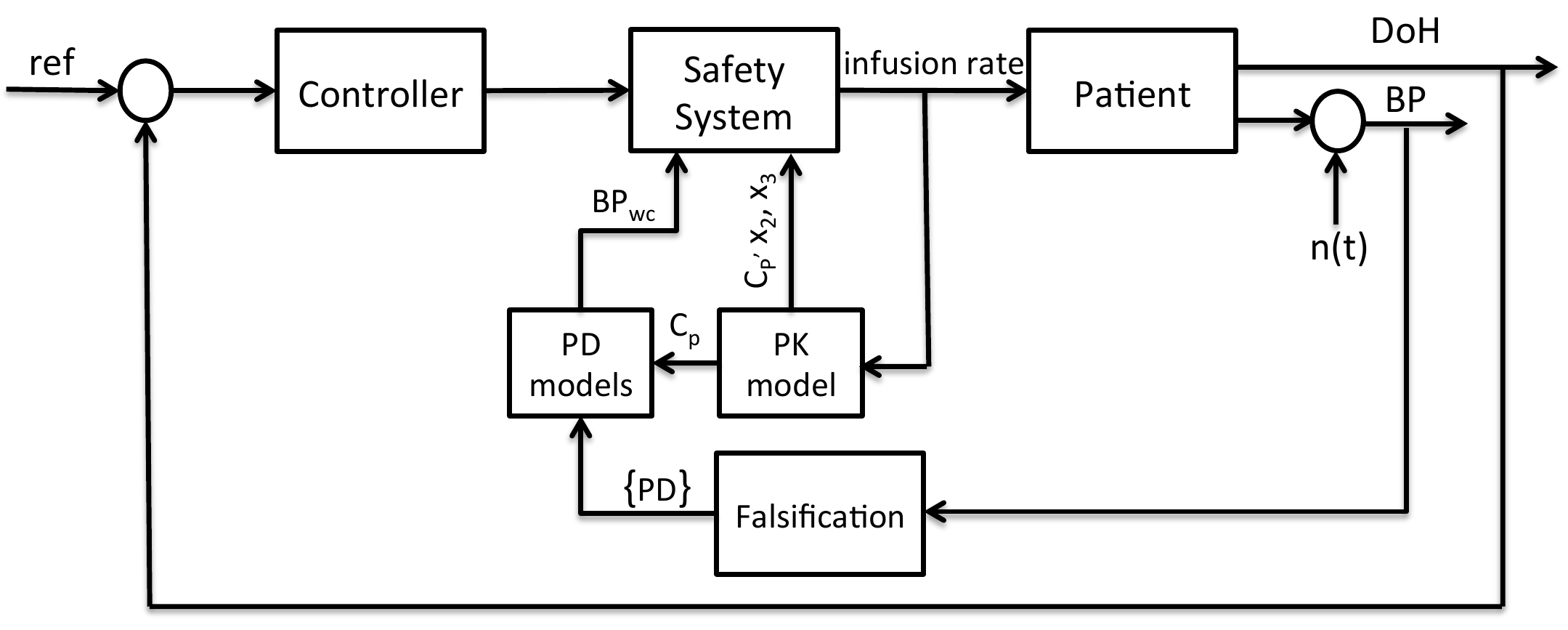}
\end{center}\caption{Block diagram of the closed-loop anesthesia with the model-invariant safety system including falsification.}
\label{fig:BlockDiag_MISP_Fals}
\end{figure}

\begin{center}
\begin{figure}[t]
\includegraphics[scale=.45]{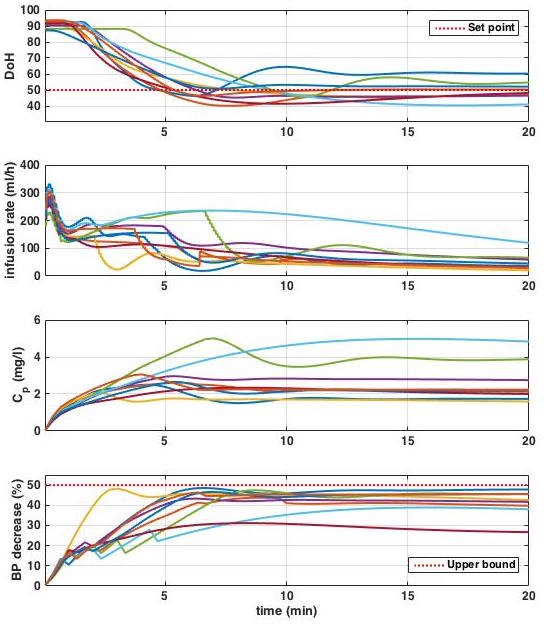}
\caption{The closed-loop responses of the patients with BP decrease limited to 50\% (individualized safety system including falsification).}
\label{fig:Results_WCBPls50_MISS_Fals}
\end{figure}
\end{center}

\begin{center}
\begin{figure}[t]
\includegraphics[scale=.45]{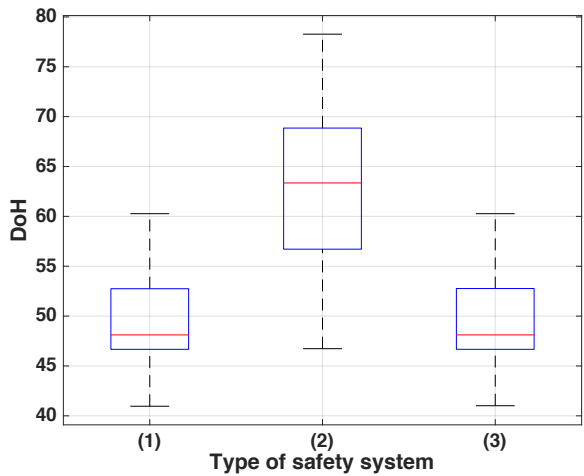}
\caption{Comparing DoH of the patients with different BP safety systems, achieved after 20 min from the start of the closed-loop administration of propofol: (1) the individualized safety system; (2) the model-invariant safety system, (3) the model-invariant safety system including falsification.}
\label{fig:CMPSP-MISP-Fals_20min}
\end{figure}
\end{center}

\vspace{-2cm}
\section{Conclusion}\label{sec:conclusion}
{\color{black}We have proposed a model-invariant safety system for closed-loop anesthesia that guarantees to maintain the patient's blood pressure within a safety bound despite inter-patient variability. 
We have proposed to use model falsification to reduce the conservatism introduced in model-invariant safety-preserving control. 
The effectiveness of blood pressure model falsification is shown using clinical data. The effectiveness of the proposed model-invariant safety-preserving control approach is demonstrated in simulation. Augmenting the model-invariant safety-preserving controller using feedback from noisy blood pressure measurements for model falsification significantly reduces conservatism introduced by the uncertainty. Model-falsification reduces this model uncertainty and subsequently decreases conservatism of the model-invariant safety system.

{\color{black}We have presented this work as a proof-of-concept of a model-invariant safety system for closed-loop anesthesia which includes model-falsification. 
The proposed solution is limited by the assumption that patient models exist in a known, finite model set, and that all outliers are removed from the data prior to falsification.
To generalize the results of this paper to cases in which these assumptions are not met, further research is required.
}
}

\bibliographystyle{unsrt}
\bibliography{myrefs}

\begin{thebibliography}{10}

\bibitem{van2014safety}
K.~van Heusden, N.~West, A.~Umedaly, J.M. Ansermino, R.N. Merchant, and G.A.
  Dumont.
\newblock Safety, constraints and anti-windup in closed-loop anesthesia.
\newblock In {\em IFAC World Congress}, volume~19, pages 6569--6574, 2014.

\bibitem{murphy2011safety}
H.R. Murphy, K.~Kumareswaran, D.~Elleri, J.M. Allen, K.~Caldwell, M.~Biagioni,
  D.~Simmons, D.B. Dunger, M.~Nodale, M.E. Wilinska, and S.A. Amiel.
\newblock Safety and efficacy of 24-h closed-loop insulin delivery in
  well-controlled pregnant women with type 1 diabetes: a randomized crossover
  case series.
\newblock {\em Diabetes care}, 34(12):2527--2529, 2011.

\bibitem{margellos2009decision}
K.~Margellos and J.~Lygeros.
\newblock Air traffic management with target windows: An approach using
  reachability.
\newblock In {\em Proceedings of the 48th IEEE Conference on Decision and
  Control, 2009 held jointly with the 2009 28th Chinese Control Conference.
  CDC/CCC 2009.}, pages 145--150, Dec 2009.

\bibitem{hasenauer2009guaranteed}
J.~Hasenauer, P.~Rumschinski, S.~Waldherr, S.~Borchers, F.~Allg{\"o}wer, and
  R.~Findeisen.
\newblock Guaranteed steady-state bounds for uncertain chemical processes.
\newblock In {\em Proceedings of International Symposium on Advanced Control of
  Chemical Processes, ADCHEM'09, Istanbul, Turkey}, pages 674--679, 2009.

\bibitem{clarke2008birth}
E.M. Clarke.
\newblock The birth of model checking.
\newblock In {\em 25 Years of Model Checking}, pages 1--26. Springer, 2008.

\bibitem{aubin2011viability}
J.P. Aubin, A.M Bayen, and P.~Saint-Pierre.
\newblock {\em Viability theory: new directions}.
\newblock Springer Science \& Business Media, 2011.

\bibitem{lygeros1999controllers}
J.~Lygeros, C.~Tomlin, and S.~Sastry.
\newblock Controllers for reachability specifications for hybrid systems.
\newblock {\em Automatica}, 35(3):349--370, 1999.

\bibitem{kurzhanski1998ellipsoidal}
A.~B. Kurzhanski and P.~Varaiya.
\newblock Ellipsoidal techniques for reachability analysis.
\newblock {\em preprint}, 1998.

\bibitem{gao2007reachability}
Y.~Gao, J.~Lygeros, and M.~Quincampoix.
\newblock On the reachability problem for uncertain hybrid systems.
\newblock {\em IEEE Transactions on Automatic Control}, 52(9):1572--1586, 2007.

\bibitem{kaynama2015scalable}
S.~Kaynama, I.M. Mitchell, M.~Oishi, and G.A. Dumont.
\newblock Scalable safety-preserving robust control synthesis for
  continuous-time linear systems.
\newblock {\em IEEE Transactions on Automatic Control}, 60(11):3065--3070,
  2015.

\bibitem{dumont2009robust}
G.~A. Dumont, A.~Martinez, and J.~M. Ansermino.
\newblock Robust control of depth of anesthesia.
\newblock {\em International Journal of Adaptive Control and Signal
  Processing}, 23(5):435--454, 2009.

\bibitem{khosravi2015constrained}
S.~Khosravi.
\newblock {\em Constrained model predictive control of hypnosis}.
\newblock PhD thesis, University of British Columbia, 2015.

\bibitem{yousefi2016model}
M.~Yousefi, K.~van Heusden, G.A. Dumont, I.M. Mitchell, and J.M. Ansermino.
\newblock Model-invariant safety-preserving control.
\newblock In {\em 2016 American Control Conference (ACC)}, pages 6689--6694.
  IEEE, 2016.

\bibitem{kaynama2012computing}
S.~Kaynama, J.~Maidens, M.~Oishi, I.M. Mitchell, and G.A. Dumont.
\newblock Computing the viability kernel using maximal reachable sets.
\newblock In {\em Proceedings of the 15th ACM international conference on
  Hybrid Systems: Computation and Control}, pages 55--64. ACM, 2012.

\bibitem{maidens2013lagrangian}
J.N. Maidens, S.~Kaynama, I.M. Mitchell, M.K. Oishi, and G.A. Dumont.
\newblock Lagrangian methods for approximating the viability kernel in
  high-dimensional systems.
\newblock {\em Automatica}, 49(7):2017--2029, 2013.

\bibitem{gillulay2011guaranteed}
J.H. Gillula and C.J. Tomlin.
\newblock Guaranteed safe online learning of a bounded system.
\newblock In {\em 2011 IEEE/RSJ International Conference on Intelligent Robots
  and Systems (IROS)}, pages 2979--2984. IEEE, 2011.

\bibitem{gillula2013reducing}
J.H. Gillula and C.J. Tomlin.
\newblock Reducing conservativeness in safety guarantees by learning
  disturbances online: iterated guaranteed safe online learning.
\newblock In {\em Robotics: Science and Systems}, volume~8, page~81, 2013.

\bibitem{PKTKN94}
K.~Poolla, P.~Khargonekar, A.~Tikku, J.~Krause, and K.~Nagpal.
\newblock A time-domain approach to model validation.
\newblock {\em IEEE Transactions on automatic control}, 39(5):951--959, 1994.

\bibitem{yousefi2017modellingl}
M.~Yousefi, K.~van Heusden, J.M. Ansermino, and G.A. Dumont.
\newblock Modelling blood pressure uncertainty for safety verification of
  propofol anesthesia.
\newblock In {\em IEEE International Conference on Systems, Man, and
  Cybernetics (SMC2017)}, pages 1740--1745. IEEE, 2017.

\bibitem{van2018closed}
K.~van Heusden, M.~Yousefi, J.M. Ansermino, and G.A. Dumont.
\newblock Closed-loop miso identification of propofol effect on blood pressure
  and depth of hypnosis.
\newblock {\em IEEE Transactions on Control Systems Technology (Submitted)},
  2018.

\bibitem{bibian2005introduction}
S.~Bibian, C.R. Ries, M.~Huzmezan, and G.A. Dumont.
\newblock Introduction to automated drug delivery in clinical anesthesia.
\newblock {\em European Journal of Control}, 11(6):535--557, 2005.

\bibitem{absalom108overview}
A.~Absalom and M.~Struys.
\newblock {\em An Overview of TCI \& TIVA. 2007, Gent}.
\newblock Academia Press. 1.

\bibitem{schnider1998influence}
T.W. Schnider, C.F. Minto, P.L. Gambus, C.~Andresen, D.B. Goodale, S.L. Shafer,
  and E.J. Youngs.
\newblock The influence of method of administration and covariates on the
  pharmacokinetics of propofol in adult volunteers.
\newblock {\em Anesthesiology}, 88(5):1170--1182, 1998.

\bibitem{eleveld2014general}
D.J. Eleveld, J.H. Proost, L.I. Cort{\'\i}nez, A.R. Absalom, and M.~Struys.
\newblock A general purpose pharmacokinetic model for propofol.
\newblock {\em Anesthesia \& Analgesia}, 118(6):1221--1237, 2014.

\bibitem{kazama1999comparison}
T.~Kazama, K.~Ikeda, K.~Morita, M.~Kikura, M.~Doi, T.~Ikeda, T.~Kurita, and
  Y.~Nakajima.
\newblock Comparison of the effect-site $k_{eo}$'s of propofol for blood
  pressure and {EEG} bispectral index in elderly and younger patients.
\newblock {\em Anesthesiology}, 90(6):1517--1527, 1999.

\bibitem{jeleazcov2015pharmacodynamic}
C.~Jeleazcov, M.~Lavielle, J.~Sch{\"u}ttler, and H.~Ihmsen.
\newblock Pharmacodynamic response modelling of arterial blood pressure in
  adult volunteers during propofol anaesthesia.
\newblock {\em British journal of anaesthesia}, 115(2):213--226, 2015.

\bibitem{gentilini2002}
A.~Gentilini, C.~Schaniel, M.~Morari, C.~Bieniok, R.~Wymann, and T.~Schnider.
\newblock A new paradigm for the closed-loop intraoperative administration of
  analgesics in humans.
\newblock {\em IEEE transactions on biomedical engineering}, 49(4):289--299,
  2002.

\bibitem{peacock1990effect}
J.E. Peacock, R.P. Lewis, C.S. Reilly, and W.S. Nimmo.
\newblock Effect of different rates of infusion of propofol for induction of
  anaesthesia in elderly patients.
\newblock {\em British Journal of Anaesthesia}, 65(3):346--352, 1990.

\bibitem{bibian2006patient}
S.~Bibian, G.A. Dumont, M.~Huzmezan, and C.R. Ries.
\newblock Patient variability and uncertainty quantification in clinical
  anesthesia: Part {I}--{PKPD} modeling and identification.
\newblock In {\em IFAC Symposium on Modelling and Control in Biomedical
  Systems, Reims, France}, 2006.

\bibitem{yousefi2017formally}
M.~Yousefi, K.~van Heusden, I.M. Mitchell, J.M. Ansermino, and G.A. Dumont.
\newblock A formally-verified safety system for closed-loop anesthesia.
\newblock {\em In 20th World IFAC Congress}, 50(1):4424--4429, 2017.

\bibitem{yousefi2018formalized}
M.~Yousefi, K.~van Heusden, I.M. Mitchell, J.M. Ansermino, and G.A. Dumont.
\newblock A formalized safety system for closed-loop anesthesia with
  pharmacokinetic and pharmacodynamic constraints.
\newblock {\em Control Engineering Practice (Submitted)}, 2018.

\bibitem{yousefi2017output}
M.~Yousefi, K.~van Heusden, I.M. Mitchell, and G.A. Dumont.
\newblock Output-feedback safety-preserving control.
\newblock In {\em American Control Conference (ACC), 2017}, pages 2550--2555.
  IEEE, 2017.

\bibitem{schuttler2000population}
J.~Sch{\"u}ttler and H.~Ihmsen.
\newblock Population pharmacokinetics of propofol: a multicenter study.
\newblock {\em Anesthesiology}, 92(3):727--738, 2000.

\bibitem{kuhn2012structure}
T.S. Kuhn.
\newblock {\em The structure of scientific revolutions}.
\newblock University of Chicago press, 2012.

\bibitem{ST97}
M.G. Safonov and T.C. Tsao.
\newblock The unfalsified control concept and learning.
\newblock In {\em Decision and Control, 1994., Proceedings of the 33rd IEEE
  Conference on}, volume~3, pages 2819--2824. IEEE, 1994.

\bibitem{VDS07}
J.~Van~Helvoort, B.~de~Jager, and M.~Steinbuch.
\newblock Direct data-driven recursive controller unfalsification with analytic
  update.
\newblock {\em Automatica}, 43(12):2034--2046, 2007.

\bibitem{Kos01}
R.L. Kosut.
\newblock Uncertainty model unfalsification for robust adaptive control.
\newblock {\em Annual Reviews in Control}, 25:65--76, 2001.

\bibitem{BBMT10}
S.~Baldi, G.~Battistelli, E.~Mosca, and P.~Tesi.
\newblock Multi-model unfalsified adaptive switching supervisory control.
\newblock {\em Automatica}, 46(2):249--259, 2010.

\bibitem{AMM05}
T.~Agnoloni, C.~Manuelli, and E.~Mosca.
\newblock Controller falsification in automatic drug delivery for neuromuscular
  blockade control.
\newblock In {\em 44th IEEE Conference on Decision and Control and European
  Control Conference. CDC-ECC'05}, pages 2990--2995. IEEE, 2005.

\bibitem{margellos2013viable}
K.~Margellos and J.~Lygeros.
\newblock Viable set computation for hybrid systems.
\newblock {\em Nonlinear Analysis: Hybrid Systems}, 10:45--62, 2013.

\bibitem{kurzhanskiui1997ellipsoidal}
A.B. Kurzhanski and I.~V{\'a}lyi.
\newblock {\em Ellipsoidal calculus for estimation and control}.
\newblock Nelson Thornes, 1997.

\bibitem{girard2005reachability}
A.~Girard.
\newblock Reachability of uncertain linear systems using zonotopes.
\newblock In {\em Hybrid Systems: Computation and Control}, pages 291--305.
  Springer, 2005.

\bibitem{abate2008probabilistic}
A.~Abate, M.~Prandini, J.~Lygeros, and S.~Sastry.
\newblock Probabilistic reachability and safety for controlled discrete time
  stochastic hybrid systems.
\newblock {\em Automatica}, 44(11):2724--2734, 2008.

\bibitem{summers2013stochastic}
S.~Summers, M.~Kamgarpour, C.~Tomlin, and J.~Lygeros.
\newblock Stochastic system controller synthesis for reachability
  specifications encoded by random sets.
\newblock {\em Automatica}, 49(9):2906--2910, 2013.

\bibitem{agrawal2010recommended}
G.~Agrawal, S.~Bibian, and T.~Zikov.
\newblock Recommended clinical range for wavcns index during general
  anesthesia.
\newblock In {\em Proceedings of the 2010 Annual Meeting of the American
  Society of Anesthesiologists}, 2010.

\bibitem{zikov2006quantifying}
T.~Zikov, S.~Bibian, G.A. Dumont, M.~Huzmezan, and C.R. Ries.
\newblock Quantifying cortical activity during general anesthesia using wavelet
  analysis.
\newblock {\em IEEE Transactions on Biomedical Engineering}, 53(4):617--632,
  2006.

\bibitem{MPT3}
M.~Herceg, M.~Kvasnica, C.N. Jones, and M.~Morari.
\newblock {Multi-Parametric Toolbox 3.0}.
\newblock In {\em Proceedings of the European Control Conference ({ECC})},
  pages 502--510, Z\"urich, Switzerland, July 17--19 2013.
\newblock {http://control.ee.ethz.ch/~mpt}.

\bibitem{dumont2011closed}
G.~A. Dumont, N.~Liu, C.~Petersen, T.~Chazot, M.~Fischler, and J.~M. Ansermino.
\newblock Closed-loop administration of propofol guided by the neurosense:
  clinical evaluation using robust proportional-integral-derivative design.
\newblock In {\em American Society of Anesthesiologists (ASA) Annual Meeting},
  2011.

\bibitem{lesser2016safety}
K.~Lesser and A.~Abate.
\newblock Safety verification of output feedback controllers for nonlinear
  systems.
\newblock In {\em 2016 European Control Conference (ECC)}. IFAC, 2016.

\end{thebibliography}

\end{document}